\newtheorem{definition}{Definition}
\newtheorem{theorem}{Theorem}
\newcommand{\ob}{\mathbb{O}}
\newcommand{\V}{\mathcal{V}}
\newcommand{\E}{\mathcal{E}}
\def\bitcoin{%
  \leavevmode
  \vtop{\offinterlineskip 
    \setbox0=\hbox{B}%
    \setbox2=\hbox to\wd0{\hfil\hskip-.03em
    \vrule height .3ex width .15ex\hskip .08em
    \vrule height .3ex width .15ex\hfil}
    \vbox{\copy2\box0}\box2}} 
  \providecommand\BibTeX{{%
    \normalfont B\kern-0.5em{\scshape i\kern-0.25em b}\kern-0.8em\TeX}}}
\def\ra{0.9}
\title{Chainlet Orbits: Topological Address Embedding for the Bitcoin Blockchain}
 \author{%
  Poupak Azad\\
  Department of Computer Science\\
  University of Manitoba\\
  Winnipeg, MB \\
  \texttt{azadp@myumanitoba.ca} \\
    \And
    Baris Coskunuzer \\
    Department of Mathematical Sciences\\
  University of Texas at Dallas \\
  Richardson, TX, USA \\
  \texttt{coskunuz@utdallas.edu} \\
  \AND
  Murat Kantarcioglu \\
   Department of Computer Science \\
   University of Texas at Dallas \\
    Richardson, TX, USA \\
  \texttt{muratk@utdallas.edu} \\
    \And
  Cuneyt Gurcan Akcora\\
  Departments of Computer Science and Statistics\\
  University of Manitoba \\
  Winnipeg, MB, Canada \\
  \texttt{cuneyt.akcora@umanitoba} \\
}
\newcommand{\C}{\mathcal{C}}
\newcommand{\I}{\mathcal{I}}
\newcommand{\B}{\mathcal{B}}
\newcommand{\G}{\mathcal{G}}
\newcommand{\h}{\mathcal{H}}
\newcommand{\A}{\mathcal{A}}
\newcommand{\s}{\mathcal{S}}
\newcommand{\wh}{\widehat} 
\newcommand{\wt}{\widetilde}
\begin{document}

\maketitle
 
\begin{abstract}
The rise of cryptocurrencies like Bitcoin, which enable transactions with a degree of pseudonymity, has led to a surge in various illicit activities, including ransomware payments and transactions on darknet markets. These illegal activities often utilize Bitcoin as the preferred payment method. However, current tools for detecting illicit behavior either rely on a few heuristics and laborious data collection processes or employ computationally inefficient graph neural network (GNN) models that are challenging to interpret.

To overcome the computational and interpretability limitations of existing techniques, we introduce an effective solution called Chainlet Orbits. This approach embeds Bitcoin addresses by leveraging their topological characteristics in transactions. By employing our innovative address embedding, we investigate e-crime in Bitcoin networks by focusing on distinctive substructures that arise from illicit behavior. 

The results of our node classification experiments demonstrate superior performance compared to state-of-the-art methods, including both topological and GNN-based approaches. Moreover, our approach enables the use of interpretable and explainable machine learning models in as little as 15 minutes for most days on the Bitcoin transaction network.
\end{abstract}

\section{Introduction}
The success of Bitcoin~\cite{nakamoto2008bitcoin} has also encouraged significant usage of cryptocurrencies for illegal activities ranging from ransomware payments~\cite{akcora2021bitcoinheist} to darknet market transactions~\cite{kethineni2018use}. For instance, the CryptoLocker ransomware spreads through spam emails, and once installed, it establishes communication with a command-and-control center, encrypts the system resources, and requests Bitcoin payments to unlock the encrypted resources. 

Since the Bitcoin transaction network can be represented as a heterogeneous directed graph, we can apply graph machine learning techniques to detect illicit transactions in node or edge classification tasks~\cite{zola2019cascading,lorenz2020machine}. Among the  techniques used, graph neural networks (GNNs) (e.g.,  GCN~\cite{kipf2016semi}, DGCNN~\cite{zhang2018end}, GIN~\cite{xu2018powerful} and GraphSage~\cite{hamilton2017inductive}) allow powerful node embeddings by first summarizing the substructure information of node neighborhoods for each node as vectors, and then combining them to get a representation of the whole graph. While GNNs are quite powerful methods, for very large graphs, they fail to embed the crucial domain information in a computationally efficient manner. Furthermore, in many cases, the results are not interpretable and explainable due to complex aggregations and embeddings done by the GNNs. 

Alternatively, orbit analysis can be employed to examine the graph geometry surrounding a node, leading to the creation of interpretable behavioral patterns~\cite{lorrain1971structural}. In the realm of graph theory and network analysis, an orbit denotes a unique configuration or arrangement of nodes and edges within a graph that recurs throughout its structure. This notion captures the symmetry and repetitive patterns present in the graph. Blockchain transaction networks lend themselves well to orbit analysis due to their heterogeneous nature~\cite{akcora2022blockchain}, which organizes nodes into clearly defined substructures, namely transactions.

We define a notion of {orbits} to determine the common structural patterns between nodes in Bitcoin transaction graphs (e.g., see Figure~\ref{fig:rsmostfreq}). Using orbit-based node embedding, we study e-crime in Bitcoin networks by focusing on special substructures induced by the illicit behavior and show that we can efficiently match the existing GNN detection accuracy while allowing interpretable and explainable machine learning models. To the best of our knowledge, \textbf{this is the first work to provide search and query capabilities over UTXO graphs}. Our framework is well-suited for human-in-the-loop scenarios, particularly when there is a need for manual analysis of numerous Bitcoin addresses due to compliance requirements.

 \begin{figure}[] 
    \centering
    \includegraphics[width=0.7\linewidth]{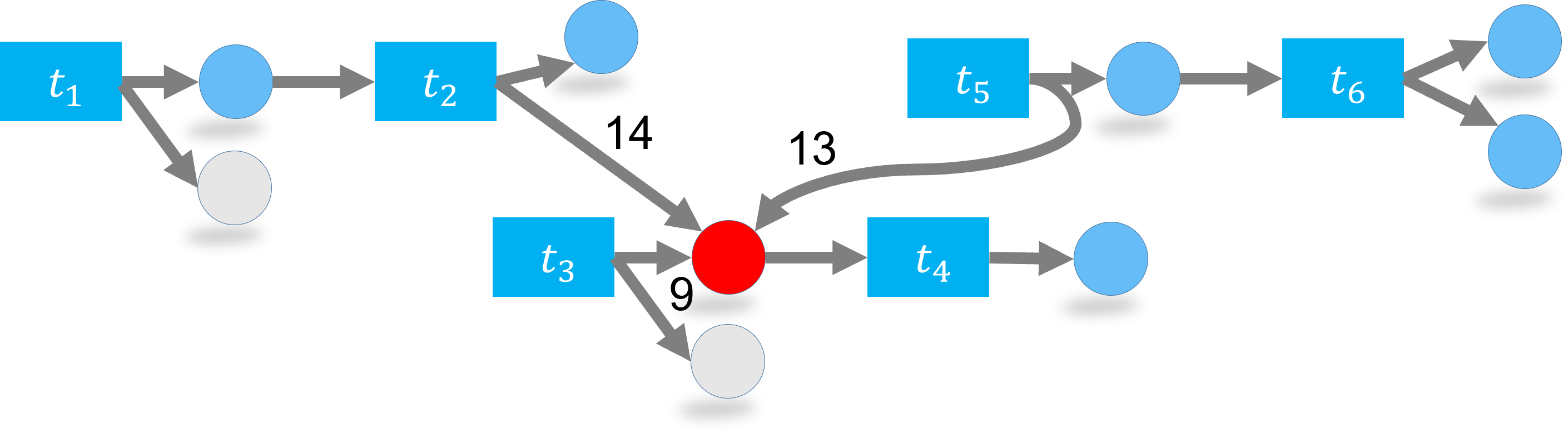}
    \caption{By analyzing transaction patterns within the directed Bitcoin network, we identify specific address (circular nodes) behaviors in transactions (rectangular nodes) and assign them structural roles which we refer to as orbits. In this example, a red (e-crime) address has been identified and assigned orbits 9, 13, and 14, which are commonly used by addresses associated with ransomware.}
    \label{fig:rsmostfreq}
\end{figure}

Our key contributions can be summarized as follows:
\begin{itemize}
\item \textit{Efficient Orbit Discovery}: We present a topological-equivalence-based node embedding called "orbit" that can effectively handle very large directed graphs, and give the mathematical foundations of the theory by applying group actions on graphs. 
\item \textit{E-Crime Address Discovery}: We distinguish orbits that capture the transaction behaviors of e-crime participants, enabling an analyst to focus on the actions of specific groups of addresses such as those controlled by ransomware operators.
\item \textit{Extensive Empirical Evaluation}: We conduct node classification experiments on the entire Bitcoin transaction network, and  demonstrate that orbit-based models outperform existing topological, and graph neural network-based methods. \footnote{Our implementation is available at \url{https://github.com/chainletRepo/chainlet}.}
\item \textit{Interpretable and Explainable  Models in Graph ML}: We use orbit patterns to profile typical behavior and enable searches and similarity queries on all blockchains similar to the Bitcoin transaction graph. 
\end{itemize}

In section~\ref{sec:prelim}, we discuss the necessary background related to our research. In section~\ref{sec:chainlet}, we define the chainlet and orbit notations that are used for the efficient embedding of the node information on Bitcoin graphs.  In section~\ref{sec:experiments}, we report extensive experimental evaluation results on real-life datasets. Finally, in section~\ref{sec:conc}, we conclude by discussing the implications of our results and future work.

\section{Background and Preliminaries}
\label{sec:prelim}
In this section, we introduce preliminaries on Blockchain, Bitcoin, and recent Blockchain research. Afterward, we introduce graph machine learning on blockchain networks and discuss node classification methods.

\subsection{E-crime on Blockchain}
The success of Bitcoin~\cite{nakamoto2008bitcoin} has also encouraged significant usage of cryptocurrencies for illegal activities. 
Bitcoin provides pseudo-anonymity; although all transactions are public by nature, user identification is not required to join the network.

The earliest results aimed at tracking the transaction network to locate bitcoins used in illegal activities, such as money laundering and blackmailing (e.g., \cite{androulaki2013evaluating}), by using heuristics. At the same time, mixing schemes (e.g., ~\cite{maxwell2013coinjoin,ruffing2014coinshuffle})
have been developed to hide the flow of coins in the network. Earlier research results have shown that some bitcoins can be traced~\cite{meiklejohn2013fistful} in the network. As a result, obfuscation efforts~\cite{narayanan2017obfuscation} by malicious users have become increasingly sophisticated. This in return increased the need for illegal activity detection techniques that go beyond simple heuristics. 
 
\noindent\textbf{Ransomware}. Ransomware is a type of malware that infects a victim's computer systems, IoT devices, and mobile devices and demands payment in exchange for releasing control over the infected resources~\cite{martin2018depth}. Ransomware can either lock access to resources or encrypt their contents. It can be delivered through various means, such as email attachments or web-based vulnerabilities, and more recently through mass exploit campaigns. For instance, the CryptoLocker ransomware used the Gameover ZeuS botnet to spread through spam emails. Once the ransomware is installed, it establishes communication with a command-and-control center. While earlier variants of ransomware used fixed IP addresses or domain names, newer versions may utilize anonymity networks, such as TOR, to connect to a hidden command-and-control server.

Montreal~\cite{paquet2018ransomware}, Princeton~\cite{huang2018tracking} and Padua~\cite{conti2018economic} studies have analyzed ransomware networks on cryptocurrencies and found that hacker behavior can help us identify undisclosed ransomware payments. The datasets of these three studies are publicly available. 

Early studies in ransomware detection have used decision rules on amounts and times of known ransomware transactions to locate undisclosed ransomware (CryptoLocker) payments~\cite{liao2016behind}. More recent studies are joint efforts between researchers and Blockchain analytics companies. For example, Huang et al.~\cite{huang2018tracking} identify shared hacker behavior and use heuristics to identify ransomware payments. The authors estimate that 20,000 victims have made ransomware payments. Nevertheless, these studies utilize proprietary features for identifying ransomware payments and families, and some studies do not disclose their feature definitions~\cite{AML}.  

\noindent\textbf{Darknet Markets}. Darknet markets are online platforms that can only be accessed through the use of privacy-enhancing tools such as TOR and I2P. These markets enable sellers to create online stores and connect with buyers. The merchandise offered on these markets ranges from illicit goods such as drugs and weapons to legal items such as fake passports. Darknet markets employ an escrow service, where buyers are provided with a one-time use address to make payments for goods. After payment, the market instructs the seller to physically ship the item to the buyer. Once the item is delivered, the market releases the coins to the seller's account, who can then cash out the coins. Our analysis in \Cref{sec:experiments} corroborates the delay results; we found that most market payments remain in an address for some days.

 Cryptocurrencies are ideal for darknet markets as payments can be made pseudo-anonymously and can be received anywhere in the world. Starting with the Silk Road in 2011, darknet markets have been processing an ever-growing amount of illicit goods. Law enforcement agencies have seized multiple darknet markets, but the ease of setting up new markets has turned this into a cat-and-mouse game. Typically, Bitcoin and Monero are used for Darknet payments. In multiple cases, the seized markets' data have been made public, where each sale is recorded with information such as the buyer, seller, good, amount, and time. The Darknet Market Archives, which we use in our experiments, is a collection of such released sales~\cite{gwerndarknet}.

\subsection{Blockchain Transaction Network}
A blockchain is an immutable public ledger that records transactions in discrete data structures called \textit{blocks}~\cite{nakamoto2008bitcoin}. The earliest blockchains are cryptocurrencies such as Bitcoin and Litecoin where a transaction is a transfer of coins.

A \textit{UTXO} (Unspent Transaction Output) network, such as Bitcoin, is a directed, weighted multigraph where an edge denotes coin transfers.

 We model the Bitcoin UTXO network with a weighted, directed heterogeneous graph with address and transaction nodes. We create the graph $\mathcal{G} = (\V, \E, \B)$ from a set of transactions $TX$ and input and output addresses in $TX$. On $\mathcal{G}$, $\V$ is a set of nodes, and $\E \subseteq \V\times \V$ is a set of edges. $\B=\left\{ \textbf{Address}, \textbf{Transaction} \right\}$ represents the set of node types. A node $u \in \V$ has a node type $\phi(u) \in B$.

For each edge $e_{uv} \in \E$ between adjacent nodes $u$ and $v$, we have $\phi(u) \neq \phi(v)$. In other words, there is no edge between the same node types (Transaction$\to$Transaction or Address$\to$Address), and an edge $e \in \E$ represents a coin transfer between an address node and a transaction node. This heterogeneous graph model subsumes the homogeneous case (i.e., $\left | B \right |=1$), where only transaction or address nodes are used, and edges link nodes of the same type. Here, we focus on the case where each address node is linked (i.e., input or output address of a transaction) via a transaction node to another address node. We use $\Gamma_a^i$ and $\Gamma_a^o$ to refer to predecessors (in-neighbors) and successors (out-neighbors) of an address $a$, respectively.
On the heterogeneous Bitcoin network, the in-neighbors $\Gamma_n^{i}$ of a transaction $tx_n$ is defined as the set of transactions (not addresses) whose one or more outputs are input to transaction $tx_n$. The out-neighbors of $tx_n$ are denoted as $\Gamma_n^{o}$. 

A transaction has inputs and outputs; the sum of output amounts of a transaction $tx_n$ is defined as $\mathcal{A}^{o}(n) =\sum\limits_{a_u \in \Gamma_n^o}{{A}_u^{o}(n)}$, where ${A}_u^{o}(n)$ is the amount of coins the output address $a_u$ receives from transaction $tx_n$. $\A^i(n)$ is defined similarly, and $\A^i(n)=\A^o(n)$, i.e., the total input amount is equal to the total output amount for any transaction $tx_n$.

On the Bitcoin network, an address may appear multiple times with different inputs and outputs. An address $u$ that appears in a transaction at time $t$ can be denoted as $a_u^t$. To mine address behavior in time, we divide the Bitcoin network into 24-hour long windows by using the UTC-6 timezone as a reference. This window approach serves two purposes. First, the induced 24-hour network allows us to capture how fast a coin moves in the network. The speed is measured by the number of blocks in the 24-hour window that contains a transaction involving the coin. Second, temporal information of transactions, such as the local time, has been found useful to cluster criminal transactions  (see Figure 7 in
\cite{huang2018tracking}). 

\subsection{Blockchain Graph Analysis}
Early efforts in Blockchain graph analysis used heuristics based on transaction behavior to de-anonymize and associate Bitcoin addresses~\cite{meiklejohn2013fistful}. Two heuristics by Meiklejohn et al.~\cite{meiklejohn2013fistful} are widely used in Bitcoin address clustering: \textbf{Co-spending} and \textbf{Transition}. Co-spending posits that "if two addresses are inputs to the same transaction, they are controlled by the same user". Transition posits that "if we observe one transaction with addresses A and B as inputs, and another with addresses B and C as inputs, then we conclude that A, B, and C all belong to the same user". We did not implement the third heuristic named \textbf{Change} used in some settings, because the results of the change heuristics are not definitive; addresses discovered using this heuristic are not necessarily controlled by the same user.

As the number of e-crime transactions increased on cryptocurrencies, algorithmic models beyond simple heuristics have been developed to link bitcoin addresses by using node features on the network~\cite{huang2018tracking}.

Mark et al.~\cite{AML} introduced the Elliptic dataset, which is a temporal graph dataset consisting of over 200,000 Bitcoin node transactions, 234,000 payment edges, and 49 transaction graphs with unique time steps. Each transaction was categorized as "licit," "illicit," or "unknown." They employed machine learning techniques including Logistic Regression (LR), Random Forest (RF), Graph Convolutional Networks (GCNs) \cite{kipf2016semi}, and EvolveGCN \cite{pareja2020evolvegcn} for transaction labeling. The evaluation showed that RF achieved a recall score of 0.67 in the illicit category, while GCNs achieved a score of 0.51. Elliptic dataset only contains anonymized transactions and omits address nodes, which we need to extract Orbits. As a result, we could not apply our Orbit approach to the dataset. Furthermore, in our experiments, we show that using a neighborhood approach yields better results as measured by AUC.

Vassallo et al. \cite{vassallo2021application} detected illicit cryptocurrency activities such as scams, terrorism financing, and Ponzi schemes, and used the Elliptic dataset to evaluate its effectiveness at both the address and transaction classification.

Feature extraction on the Bitcoin transaction network has many examples. Chaehyeon et al. \cite{lee2019toward} employed supervised machine learning algorithms to classify e-crime addresses in the Bitcoin network. The authors extracted classification features based on the characteristics of Bitcoin transactions, such as transaction fees and size.  Monamo et al. \cite{monamo2016unsupervised} used nearest-neighbor methods to detect fraud in the Bitcoin network, using graph centrality measures and currency features, such as the total amount sent. \textit{Our method avoids domain-specific feature engineering}, and creates \textit{structural topological node embeddings that can be applied to any UTXO-based blockchain} (See the appendix for applicability to other blockchains).

Graph neural networks achieve better results and do not require feature engineering. AlArab et al.~\cite{Alarab} proposed a model that combines GCNs and linear layers to classify illicit nodes in the Elliptic dataset~\cite{AML}. The results showed that the model had an overall classification accuracy of 97.40\% and a recall of 0.67 for detecting illicit transactions. In \cite{nan2018bitcoin}, the authors used an autoencoder with graph embedding to detect mixing and demixing services for Bitcoin cryptocurrency, and the proposed model effectively performed the detection of these anomalies. In \cite{pham2016anomaly}, the authors used unsupervised learning to identify suspicious nodes in the Bitcoin transaction graph. Unlike these approaches, \textit{our method can operate on larger networks and provide interpretable and explainable results}.

\section{Chainlets and  Orbits}
\label{sec:chainlet}

In this section, we first outline UTXO chainlets and give mathematical background, then define our topological address embeddings, i.e., orbits. We start with giving our intuition and motivation.

\noindent\textbf{Intuition behind Orbits:} UTXO transaction networks are composed of transaction and address nodes, as shown in Figure \ref{fig:twochainlet}. The position of address nodes within a subgraph (i.e., chainlet) determines their structural role, referred to as an "orbit," which can be leveraged in supervised learning tasks. Our primary objective is to establish these roles and utilize them in graph machine learning applications.

\noindent\textbf{Chainlet}.  
Chainlets have been initially introduced in the setting of temporal predictions to define position invariant versions of occurrence and amount information of Bitcoin temporal transaction patterns (see overviews by~\cite{akcora2021blockchain,akcora2017chainlet}).  The key idea of the chainlet approach is to offer a lossless summary of all transactions by assigning two indices $i,o$ in the $[0,n]$ range to each transaction with respect to its input and output addresses. A Bitcoin subgraph $\G'=(\V', \E',\B)$ is a \textit{subgraph} of $\G$, if $\V'\subseteq \V$ and $\E'\subseteq \E$. If $\mathcal{G'}=(\V', \E',\B)$ is a subgraph of $\mathcal{G}$ and $\E'$ contains all edges $e_{uv}\in \E$ such that $(u,v) \in V'$, then $\G'$ is called an \textit{induced subgraph} of $\G$. Two graphs $\mathcal{G'}=(\V', \E',\B)$ and $\mathcal{G^{''}}=(\V^{''}, \E^{''},\B)$ are called \textit{isomorphic} if there exists a bijection $h: \V' \to \V^{''}$ such that all node pairs $u,v\in\V'$ are adjacent  in $\G'$ if and only if $h(u)$ and $h(v)$
are adjacent in $\G^{''}$.

\begin{definition} [$k$-chainlet]
Let \textit{$k$-chainlet} $\C_k$ in $\G$ be the subgraph of $\G$ containing $k$ subsequent transaction nodes, all output nodes of these $k$ transaction nodes, and all input nodes of the first transaction  (See \Cref{fig:twochainlet}).
\end{definition}

A 1-chainlet lists a transaction node with its input and output addresses. \Cref{fig:twochainlet} shows a 2-chainlet; transactions $t_1$ and $t_2$ share the common address $a_1$. A $k$-chainlet where $k>1$ has a natural ordering of transactions because transactions are ordered in blocks which, in turn, are ordered in a blockchain. This natural order allows us to assign roles to an address depending on where the address appears in a 2-chainlet. In the next section, we will count address roles and list them as address orbits.

\begin{figure}[h]
    \centering
    \includegraphics[width=0.6\linewidth]{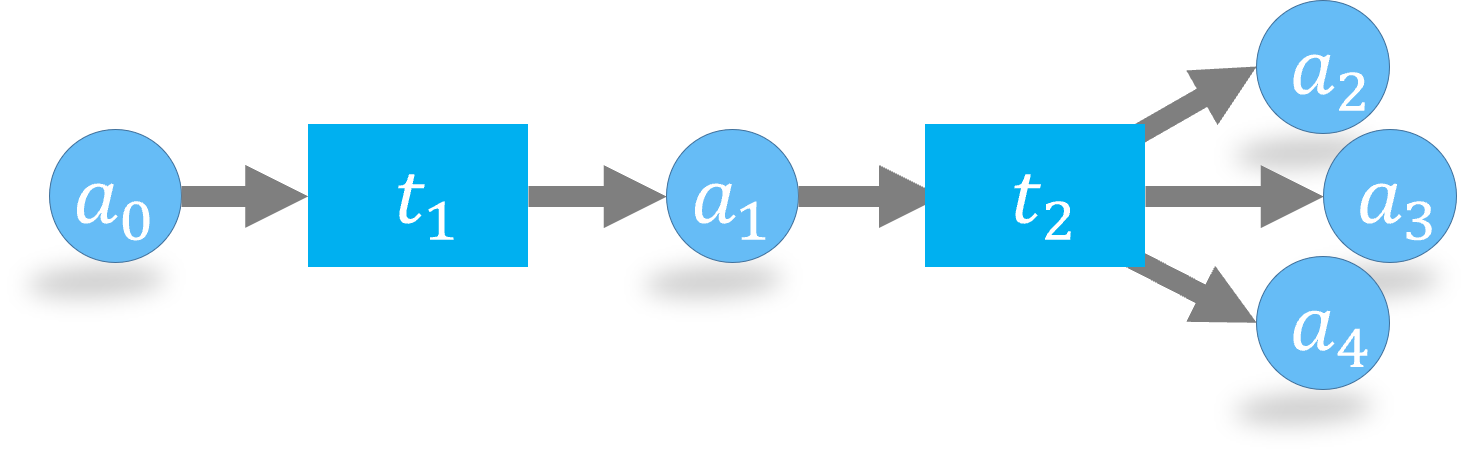}
    \caption{A 2-chainlet of five addresses. Each of the two transactions has one input address, but the $t_2$ has three output addresses.}
    \label{fig:twochainlet}
\end{figure}

\subsection*{Orbits}

Before we define chainlet orbits, we need to recall a few basic notions from graph theory. A mapping $\varphi: \G \to \h$ is called \textit{a graph isomorphism} if $\varphi: \V_\G\to \V_\h$ is a bijection such that $u,v\in \V_\G$ are adjacent in $\G$ if and only if $\varphi(u)$ and $\varphi(v)$ are adjacent in $\h$. If there exists an isomorphism from $\C_k$ to another chainlet $\C_k'$ of $\G$, we call $\C_k'$  an \textit{occurrence} of $\C_k$ 
in $\mathcal{G}$. We define the chainlet class of $\C_k$ in $\G$ as the set of all occurrences of $\C_k$ as follows:

\begin{definition} [Chainlet Classes] \label{def:class} Let $\A_k(\G)$ be the set of all $k$-chainlets in $\G$. For a given $k$-chainlet $\C_k\in \A_k(\G)$, the equivalence class of $\C_k$, $\E(\C_k)$, is the set of all isomorphic copies (occurrences) of $\C_k$ in $\A_k$. 
\end{definition}

While in the previous definition, we define equivalence classes of chainlets in $\G$, in the next definition we define an equivalence class of address nodes for a fixed chainlet class.

\begin{definition} [Orbit of an Address Node] \label{def:orbit} Let $\C_k$ be a $k$-chainlet in $\G$, and $u$ be an address node in $\C_k$. We define {\em the orbit of $u$ with respect to $\C_k$} as the collection of the images $\varphi(u)$ where $\varphi$ is an isomorphism from $\C_k$ to another $k$-chainlet $\C'_k$ in $\G$, i.e. $O_u=\{\varphi(u)\}$. We call $u\in \C_k$ and $v=\varphi(u)\in \C'_k$ {\em similar nodes} wrt $\C_k$, i.e., $u\sim v \ (\mbox{mod } C_k)$.  
\end{definition}

Notice that the similarity of two nodes depends on the fixed chainlet class $\E(\C_k)$. 
In other words, a node can belong to different chainlets $\C_k$ and $\C_k'$ because of their relative locations to the transaction nodes. Hence, the orbit of $u$ with respect to $\C_k$ versus with respect to $\C_k'$ can be completely different. In the remaining text, if the fixed chainlet type $\C_k$ is understood, we will refer to it as simply \textit{orbit of u}, $O_u$. Note also that if $u\sim v$, then the corresponding chainlets containing $u$ and $v$ must be in the same equivalence class of chainlets. In other words, any orbit $O_u$ belongs to only one chainlet class $\E(\C_k)$. Furthermore, by definition, if two nodes are similar, then they have the same orbit and vice versa, i.e. $u\sim v\iff O_u=O_v$.

The following theorem gives the relationship between the size of the orbit $O_u$ of a node $u\in \C_k$, and the size of the equivalence class of the chainlet $\E(\C_k)$.  Let $\I(\C_k)$ be the set of all isomorphisms between any two $k$-chainlets in $\E(\C_k)$. $\Lambda(\C_k)$ be the group of all automorphisms (self-isomorphisms) of $\C_k$. Then, for any node $u$ in $\C_k$, we define the stabilizer subgroup of $u$ in $\Lambda(\C_k)$ as $\s(u)=\{\varphi\in \Lambda(\C_k) |\varphi(u)=u\}$. Let $|K|$ represent the cardinality of the set $K$. Now, we are ready to state our theorem.

\begin{theorem} Let $\C_k$ be a $k$-chainlet in $\G$, and $u$ be a node in $\C_k$. Let $I(\C_k),\Lambda(\C_k), S(u)$ be as defined above. Then, we have $$|\I(\C_k)|=|\E(\C_k)|^2\cdot|\Lambda(\C_k)|  \quad \mbox{and} \quad |O_u|= |\E(C_k)|\cdot\frac{|\Lambda(\C_k)|}{|\s(u)|}$$
\end{theorem}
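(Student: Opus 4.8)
The plan is to derive both identities from one observation — that any isomorphism between occurrences of $\C_k$ equals a fixed reference isomorphism composed with an automorphism of $\C_k$ — and then to feed this into the orbit–stabilizer theorem for the action of $\Lambda(\C_k)$ on the node set $\V_{\C_k}$. First I would show that for \emph{every} ordered pair $\C,\C'\in\E(\C_k)$ the set $\mathrm{Iso}(\C,\C')$ of chainlet isomorphisms $\C\to\C'$ has size $|\Lambda(\C_k)|$: since $\C$ and $\C'$ are isomorphic copies of $\C_k$, $\mathrm{Iso}(\C,\C')$ is nonempty, and fixing $\psi\in\mathrm{Iso}(\C,\C')$ the map $\varphi\mapsto\psi^{-1}\circ\varphi$ is a bijection onto $\mathrm{Aut}(\C)$, while conjugation by any isomorphism $\C_k\to\C$ identifies $\mathrm{Aut}(\C)$ with $\Lambda(\C_k)=\mathrm{Aut}(\C_k)$. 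Since $\I(\C_k)$ is by definition the disjoint union of the sets $\mathrm{Iso}(\C,\C')$ over the $|\E(\C_k)|^2$ ordered pairs of occurrences (disjoint because each isomorphism has a well-defined domain and codomain), summing the constant value $|\Lambda(\C_k)|$ yields $|\I(\C_k)|=|\E(\C_k)|^2\cdot|\Lambda(\C_k)|$.

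For the orbit, let $T=\bigcup_{\C'\in\E(\C_k)}\mathrm{Iso}(\C_k,\C')$ be the set of all chainlet isomorphisms with domain $\C_k$; by the previous step (taking $\C=\C_k$, and noting $\C_k\in\E(\C_k)$) we get $|T|=|\E(\C_k)|\cdot|\Lambda(\C_k)|$, and $O_u$ is exactly the image of the evaluation map $T\to\V_\G$, $\varphi\mapsto\varphi(u)$. The orbit–stabilizer theorem applied to $\Lambda(\C_k)$ acting on $\V_{\C_k}$ gives $|\{\alpha(u):\alpha\in\Lambda(\C_k)\}|=|\Lambda(\C_k)|/|\s(u)|$ (so in particular $|\s(u)|$ divides $|\Lambda(\C_k)|$ and the expression is an integer). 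Fixing $\psi_{\C'}\in\mathrm{Iso}(\C_k,\C')$ and writing every element of $\mathrm{Iso}(\C_k,\C')$ uniquely as $\psi_{\C'}\circ\alpha$ with $\alpha\in\Lambda(\C_k)$, the images of $u$ contributed by isomorphisms into $\C'$ are precisely $\psi_{\C'}(\{\alpha(u):\alpha\in\Lambda(\C_k)\})$, a set of size $|\Lambda(\C_k)|/|\s(u)|$ since $\psi_{\C'}$ is injective. Summing over the $|\E(\C_k)|$ occurrences — and using that a similar node $v=\varphi(u)$ together with its structural role determines the occurrence $\C'$ in which it sits, which is the content of the remark that each orbit lies in a single chainlet class — these images are disjoint, giving $|O_u|=|\E(\C_k)|\cdot|\Lambda(\C_k)|/|\s(u)|$.

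The only genuinely delicate step is this last disjointness, equivalently the claim that every nonempty fiber of $\varphi\mapsto\varphi(u)$ on $T$ has size exactly $|\s(u)|$ — i.e.\ that all isomorphisms $\C_k\to(\cdot)$ sending $u$ to a fixed $v$ share a single codomain. The torsor structure together with orbit–stabilizer already delivers the count \emph{per occurrence}, and everything else is bookkeeping; ruling out a vertex being reached as the image of $u$ through two distinct occurrences of the same class is where the structural description of $k$-chainlets must be invoked. If one instead reads $O_u$ with the union taken over occurrences, the identity follows immediately from the two steps above, and the two readings coincide under the convention that a vertex's role inside a $k$-chainlet of a fixed class is unambiguous.
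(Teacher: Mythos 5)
Your proof is correct and follows essentially the same route as the paper's: you conjugate by fixed reference isomorphisms to identify each set $\mathrm{Iso}(\C,\C')$ with $\Lambda(\C_k)$ (giving $|\E(\C_k)|^2\cdot|\Lambda(\C_k)|$ over ordered pairs of occurrences), and you apply orbit--stabilizer to the action of $\Lambda(\C_k)$ on the node set to obtain the per-occurrence count $|\Lambda(\C_k)|/|\s(u)|$ before summing over the $|\E(\C_k)|$ occurrences, exactly as the paper does with its fixed $\varphi_i$ and the correspondence $O_u^1\times\{\varphi_1,\dots,\varphi_m\}\leftrightarrow O_u$. The one place you are more careful than the paper is the final disjointness of the images of $u$ across possibly overlapping occurrences, which the paper asserts without comment; flagging that as a needed convention is added rigor on your part, not a gap.
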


\begin{proof} We start with proving  the first equation $|\I(\C_k)|=|\E(\C_k)|^2\cdot|\Lambda(\C_k)|$. Let $|\E(\C_k)|=m$ and $\E(\C_k)=\{\C_k^1, \C_k^2, \dots, \C^m_k\}$ where $\C_k=\C_k^1$. For each $\C_k^i$, fix $\varphi_i\in \I(\C_k)$ such that $\varphi_i:\C_k^1\to\C_k^i$. Let $\psi\in \I(\C_k)$ be any isomorphism, i.e., $\psi:\C_k^{j_1}\to \C_k^{j_2}$. Then, $\widehat{\psi}=\varphi^{-1}_{j_2}\circ \psi \circ \varphi_{j_1}$  is an automorphism of $\C_k$, i.e. $\wh{\psi}:\C_k\to\C_k$. This implies $\wh{\psi}\in \Lambda(\C_k)$. Hence, by using fixed isomorphisms $\{\varphi_1, \dots,\varphi_m\}$, we find a unique automorphism $\wh{\psi}\in \Lambda(\C_k)$ for any isomorphism $\psi\in \I(\C_k)$. Notice that this process is reversible, i.e., for any $1\leq i\leq j\leq m$, and for any $\psi\in \Lambda(\C_k)$, we obtain a unique isomorphism $\wt{\psi}_{ij}=\varphi_j\circ\psi\circ\varphi_i^{-1}:\C^i_k\to\C_k^j$. This implies $|\I(\C_k)|=|\E(\C_k)|^2\cdot|\Lambda(\C_k)|$. The first equation follows.

To prove the second equation $|O_u|= |\E(C_k)|\cdot\frac{|\Lambda(\C_k)|}{|\s(u)|}$, we first show that $\frac{|\Lambda(\C_k)|}{|\s(u)|}$ is equal to the number of similar nodes of $u$ in $\C_k$, i.e. $\frac{|\Lambda(\C_k)|}{|\s(u)|}=|O_u^1|=|O_u\cap\V_{\C^1_k}|$ where $O^i_u$ represent the similar nodes to $u$ belonging to $\C^i_k$. Notice that $\Lambda(\C_k)$ is a group acting on the set $\C_k$. Hence, the size of the orbit  of $u$ in $\C^1_k$  is equal to the cardinality of the quotient of $\Lambda(\C_k)$ by the stabilizer subgroup $\s(u)$ by \cite[Proposition 5.1]{lang2012algebra}. In other words, $O_u^1$ has one to one correspondence with the quotient group $\Lambda(\C_k)/\s(u)$ (\# of cosets). Hence, if $O_u^1=\{u_1, u_2,\dots, u_r\}$, then  $\Lambda(\C_k)/\s(u)=\{\s(u)u_1, \dots, \s(u)u_r\}$, i.e., $|O_u^1|=\frac{|\Lambda(\C_k)|}{|\s(u)|}$

Now, take any node $v\in O_u$. Assume $v\in \C_k^i$ where $\C_k^i\in \E(\C_k)$. By definition of the whole orbit $O_u$, there exists an isomorphism $\psi:\C_k^1\to \C_k^i$ with $\psi(u)=v$. Since $\varphi_i^{-1}\circ \psi\in \Lambda(\C_k)$, $\varphi_i^{-1}(v)\in O_u^1$. This proves there is a one-to-one correspondence between the sets  $ O_u^1\times\{\varphi_1, \dots,\varphi_m\}$ and $O_u$. In other words, $O_u=\bigcup_{i=1}^m O_u^i$ where $O_u^i$ is the orbit of $u$ in $\C_k^i$. The proof follows.    
\end{proof}

\noindent\textbf{Choice on $k$-chainlets.} In this article, we focus on $2$-chainlet classes, and orbits of the addresses defined by these $2$-chainlets because this choice provides rich enough and effective structural topological embeddings that capture important transaction behavior. We chose to not utilize one-chainlets as they only model the receiving behavior of coins, whereas we wanted to model both the sending and receiving behavior.  $k$=2 is a natural choice that suits all application settings. Note that $k$=2 subsumes $k$=1 (as we show in only-active and only-passive experiments in \Cref{tab:allauc}). $k>3$ is not useful because once bitcoins are received (modeled with $k$=1) and spent (modeled with k=2), the owner of the coins has no control over what transactions the new buyer of the coins creates.  In this sense, $k$=2 is necessary and sufficient in understanding spending behavior. Beyond 2, we can only model the global behavior in the graph.  See appendix Figures~\ref{fig:orbits0}, \ref{fig:orbitsx1n}, \ref{fig:orbitsx2n}, \ref{fig:orbitsx2n}, \ref{fig:orbitsx3n} for all 48 orbits.

When counting the total number of distinct orbits, we exclude the input addresses of transaction $t_1$ (i.e., the first transaction involved in a $2$-chainlet) as their inclusion leads to a count of more than 120 orbits. Figure~\ref{fig:activepassive} shows an example of the resulting reduced chainlet form. By disregarding the input information of the first transaction, we reduce the orbit count to 48 (see Appendix for definitions and shapes). As an address typically has only 1.5 orbits on average, utilizing this reduced set of 48 orbits mitigates the sparsity issues in the extracted orbit data. Our experiments, presented in Section~\ref{sec:experiments}, demonstrate that our orbit definition provides strong classification power.

\noindent \textbf{Active and Passive Orbits:} We define the orbit of an address as active or passive based on the role of the address within a $2$-chainlet.   

\begin{definition} [Active Orbit] Let $\C$ be a $2$-chainlet. An active orbit with respect to $\C$ is an orbit where its addresses appear as the output of the first transaction in the corresponding occurrence of $\C$. When $\C$ is understood, we denote the set of active orbits as $O_{act}$.
\end{definition}
Active orbits indicate behavior created by the address owner, as the address in this orbit initiates a new transaction and sends coins to other addresses. We define  passive orbits  as follows:
\begin{definition} [Passive Orbit] Let $\C$ be a $2$-chainlet. A passive orbit with respect to $\C$ is an orbit where it appears as the output of the second transaction in the corresponding occurrence of $\C$. Again, when $\C$ is understood, we denote the set of passive orbits as $O_{pass}$.  
\end{definition}

Passive orbits for a chainlet class $\E(\C)$, on the other hand, receive coins without shaping any occurrences of $\C$, as the address in this orbit only receives coins from another address, without initiating a new transaction. It is worth noting that an address can belong to an active orbit for one chainlet class while belonging to a passive orbit for a different chainlet class. This means that an address may exhibit both initiating and receiving behavior in different $2$-chainlet types. Figure~\ref{fig:activepassive} shows the example of an active orbit 5 and a passive orbit 6 in a  2-chainlet.

\begin{figure}
    \centering
    \includegraphics[width=0.6\linewidth]{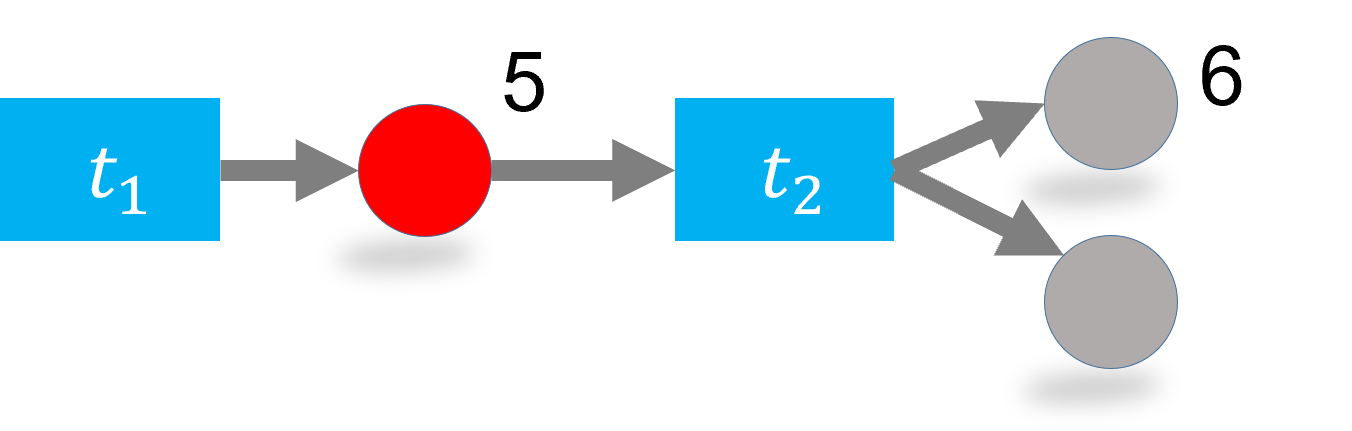}
    \caption{Examples of an active orbit 5 (in red) and a passive orbit 6 (in gray) in a 2-chainlet. Both outputs of $t_2$ are orbit 6 addresses. Address with orbit 5 initiates the second transaction, $t_2$, in the 2-chainlet, thus its spending behavior shapes the structure of the chainlet and determines the orbit type of the associated address.}
    \label{fig:activepassive}
\end{figure}

The distinction between active and passive orbits is significant as it allows us to focus on specific types of behavior within the transaction graph. Active orbits, for example, can be used to train a classifier for identifying spending patterns associated with coins gathered from darknet marketplaces and ransomware payments. This is because active orbits indicate behavior initiated by the address owner. On the other hand, passive orbits, which are shaped by payers, can be used to study the behavior of victims, as these orbits simply receive coins without shaping any chainlet. By understanding the behavior of both active and passive orbits, we can gain a more comprehensive understanding of the transaction graph and the actors within it.

\noindent\textbf{Computational Complexity of Orbit Discovery:}\label{sec:orbitcompl} Extracting a 2-chainlet requires three searches on the adjacency matrix. First, we must find all transaction nodes which has a cost of $\mathcal{O}(\V)$. Next, for every transaction pair $t_1$ and $t_2$, we must find if $ \exists a \in \{O(t_1) \cap I(t_2)\}$, where $O(t_1)$ represents the set of outputs of transaction $t_1$ and $I(t_2)$ represents the set of inputs of transaction $t_2$. The second step thus has a complexity of $\mathcal{O}(\V^3)$. However, 57.40\% of Bitcoin transactions have less than two input and two output addresses. Hence the complexity is reduced to $k\times \mathcal{O}(\V^2)$  where $k \approx 2$. As a result, finding all address orbits of a daily Bitcoin graph (approximately $400K$ transactions) takes around 15 minutes on an off-the-shelf computer. Please refer to \Cref{fig:time} for the visualization of the orbit extraction time in the daily heterogeneous Bitcoin transaction graph.

\section{Experiments}
\label{sec:experiments}

\noindent\textbf{Datasets}
We extract data from the Bitcoin transaction network and use labels for ransomware and darknet market addresses that have been obtained from external sources. 

\noindent\textbf{Bitcoin Transaction Network.} We have downloaded and analyzed the complete Bitcoin transaction graph from its inception in 2009 until 2023 by utilizing the Bitcoin Core Wallet and the Bitcoin-ETL library~\url{https://github.com/blockchain-etl/bitcoin-etl}. By implementing a 24-hour time window centered on the Central Standard Time zone, we extract daily transactions on the network and construct the Bitcoin graph.

\noindent\textbf{Ransomware Addresses.} Our ransomware dataset is a union of datasets from three widely adopted studies:    
Montreal~\cite{paquet2018ransomware}, Princeton~\cite{huang2018tracking} and  Padua~\cite{conti2018economic}. The combined dataset contains addresses from 27 ransomware families. 19930 unique ransomware addresses appear a total of 62641 times between 2009-2023.

\noindent\textbf{Darknet Addresses.} We have downloaded the Grams dataset from Darknet Market Archives~\cite{gwerndarknet}. Grams was a platform that primarily focused on facilitating searches for market listings. The platform utilized a variety of methods to acquire the listings, including utilizing API exports provided by various markets. The dataset spans from June 9, 2014, to July 12, 2015. We identify 7557 unique addresses associated with darknet marketplaces a total of 1288100 times.

\subsection{Experimental Setup}
\noindent{\bf Hardware:}  We ran experiments on a single machine with a Dell PowerEdge R630, featuring an Intel Xeon E5-2650 v3 Processor (10-cores, 2.30 GHz, 20MB Cache), and 192GB of RAM (DDR4-2133MHz).

\noindent{\bf Preprocessing:} We do not apply any normalization or preprocessing to the address orbits. Because extracting orbits is not demanding in terms of computational resources, we were able to use the entire transaction network without removing any edges or nodes. The address orbits, which are 27 GB in size, are publicly available at \url{***.org/btc/orbits}. 

Orbit discovery is an efficient process on the large Bitcoin transaction network. As \Cref{fig:time} shows, on the majority of days, orbit extraction is completed within 1000 seconds (equivalent to 15 minutes).
\begin{figure}
    \centering
    \includegraphics[width=0.8\linewidth]{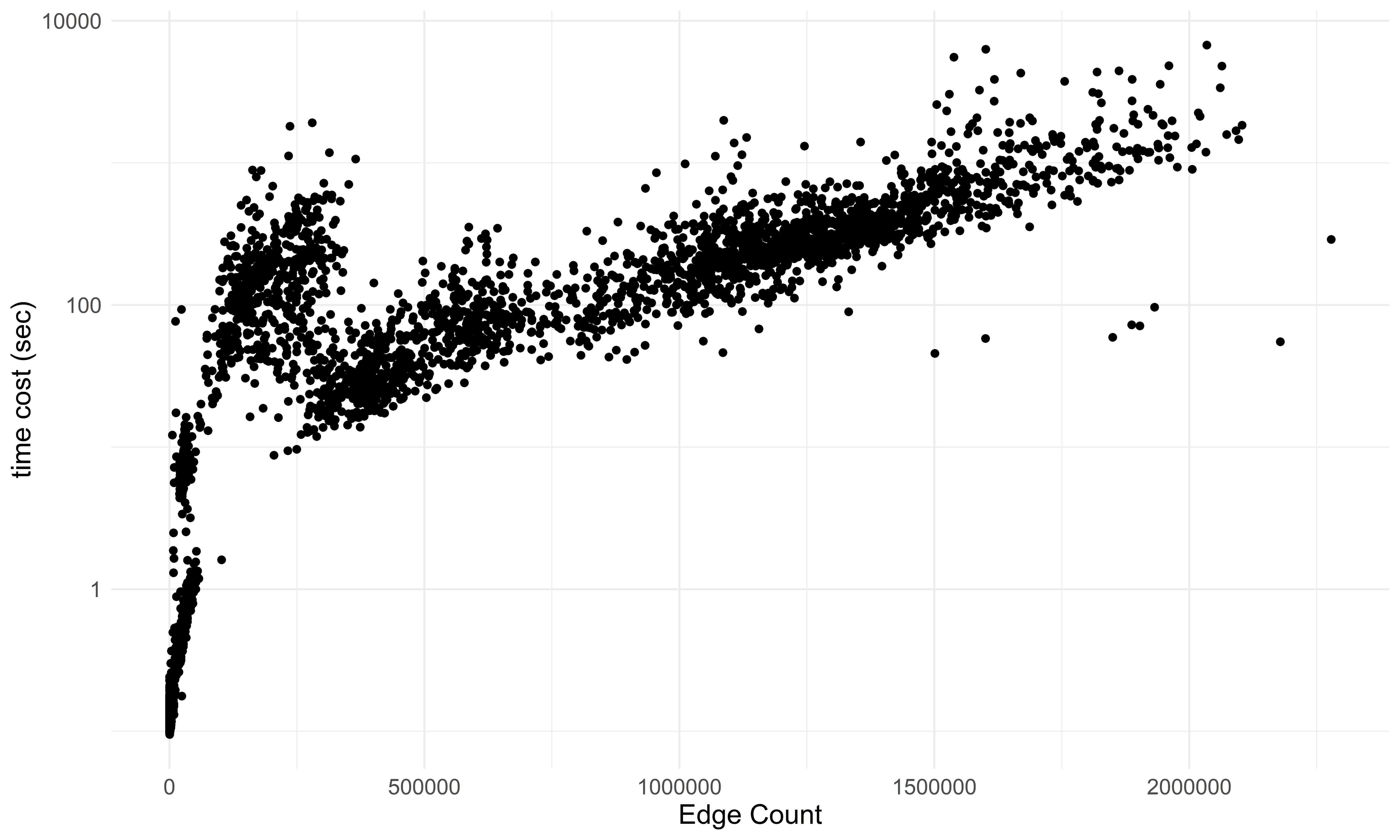}
    \caption{Time required to extract orbits from the daily Bitcoin transaction graph, which consists of two node types (addresses and transactions) and their edges. Each data point represents the time costs associated with extracting orbits for a particular day.}
    \label{fig:time}
\end{figure}

\noindent{\bf Code and Parameters:} Our code and details on hyperparameter selection and models are given at  \url{https://github.com/chainletRepo/chainlet}.

\subsection{Results}
This section begins by presenting the outcomes of widely employed heuristics for clustering Bitcoin addresses. Then, we outline the tactics utilized in e-crime before proceeding to the primary findings of this article, including the BitcoinHeist benchmark and the full orbit dataset.

\noindent\textbf{Heuristics.} By employing the co-spending and transition heuristics (\Cref{sec:prelim}) with all history of ransomware addresses, we discover only 40 unique addresses from  CryptoLocker (Padua), CryptoWall (Padua), CryptoTorLocker2015 (Montreal), and CryptoTorLocker2015 (Padua) families. These low numbers are also because researchers of the datasets had already used the heuristics to expand the datasets. As a result, we do not find many more undisclosed addresses.

\subsubsection{Adversarial Patterns of E-Crime}

\begin{table}
\caption{Top-3 most frequent orbit counts of darknet market, ransomware (RS) and white addresses. Orbits that do not appear in the top-3 have been  excluded. Percentages are computed within each address class. }
\label{tab:freqorbits}
\centering
\tiny
\begin{tabular}{l c c c c c c c}

\toprule
    	&\textbf{Rank}	&\textbf{$\ob1$}	&\textbf{$\ob9$}	&\textbf{$\ob12$}	&\textbf{$\ob13$}	&\textbf{$\ob14$}	&\textbf{Perc.}\\
\cmidrule{2-8}
		&1	&\textcolor{gray}{0}	&\textcolor{gray}{0}	&\textcolor{gray}{0}	&\textcolor{red}{1}	&\textcolor{red}{\textbf{1}}	&2.63\\

Darknet	&2	&\textcolor{red}{\textbf{1}}	&\textcolor{gray}{0}	&\textcolor{gray}{0}	&\textcolor{gray}{0}	&\textcolor{gray}{0}	&1.13\\

		&3	&\textcolor{red}{\textbf{1}}	&\textcolor{gray}{0}	&\textcolor{gray}{0}	&\textcolor{gray}{0}	&\textcolor{red}{1}	&0.97\\
\cline{2-8}
		&1	&\textcolor{gray}{0}	&\textcolor{red}{1}	&\textcolor{gray}{\textbf{0}}	&\textcolor{red}{1}	&\textcolor{red}{\textbf{1}}	&6.69\\

RS		&2	&\textcolor{gray}{0}	&\textcolor{gray}{0}	&\textcolor{gray}{0}	&\textcolor{red}{1}	&\textcolor{red}{\textbf{1}}	&6.61\\

		&3	&\textcolor{gray}{0}	&\textcolor{gray}{0}	&\textcolor{red}{\textbf{1}}	&\textcolor{red}{\textbf{1}}	&\textcolor{red}{1}	&3.23\\
\cline{2-8}
		&1	&\textcolor{gray}{0}	&\textcolor{gray}{0}	&\textcolor{red}{\textbf{1}}	&\textcolor{gray}{0}	&\textcolor{red}{1}	&5.93\\

White	&2	&\textcolor{gray}{0}	&\textcolor{gray}{\textbf{0}}	&\textcolor{red}{\textbf{1}}	&\textcolor{red}{\textbf{1}}	&\textcolor{red}{1}	&4.89\\

		&3	&\textcolor{red}{\textbf{1}}	&\textcolor{gray}{0}	&\textcolor{gray}{0}	&\textcolor{gray}{0}	&\textcolor{gray}{0}	&4.40\\

\bottomrule
\end{tabular}
\end{table}

Our analysis has revealed a total of 22682 unique orbit patterns associated with addresses linked to ransomware and 546838 unique orbit patterns associated with addresses associated with darknet marketplaces. Typically, an address is only connected to a single occurrence within an orbit for any given day, resulting in an orbit value of 1. \Cref{tab:freqorbits} lists the most frequent orbit patterns for classes. 

Additionally, 2.63\% of all darknet market addresses appear in two 2-chainlets daily, within orbits $\ob13$ and $\ob14$. However, it should be noted that these orbits are prevalent across all types of addresses, including those associated with ransomware. As \Cref{tab:freqorbits} illustrates, the top three patterns associated with ransomware addresses make up 16.53\% of all ransomware addresses, indicating a lack of diversity in ransomware address behavior. This implies that ransomware operators tend to use a limited number of patterns, which makes tracking ransomware payments easier. 

 \begin{table*}[]
 \caption{Top-5 orbit patterns where addresses have the highest likelihood of being associated with ransomware. A checkmark indicates non-zero orbit count. The RS\% column in the table represents the percentage of addresses that are identified as ransomware addresses. W (white), DM (darknet market) and RS (ransomware) are counts of addresses with the related pattern.  For example, for the first orbit pattern, 95.9\% (i.e., 94) of the addresses with that pattern are identified as ransomware addresses, while the remaining addresses are either white or associated with darknet }
    \label{tab:rsFreq}
    \centering
    \tiny
    \begin{tabular} {p{0.10cm} p{0.10cm} p{0.10cm} p{0.10cm} p{0.20cm} p{0.20cm} p{0.20cm} p{0.20cm} p{0.20cm} p{0.20cm} p{0.20cm} p{0.20cm} p{0.20cm} p{0.20cm} p{0.20cm} p{0.20cm} p{0.20cm} p{0.20cm} r r r r}
         \toprule          $\ob1$&$\ob6$&$\ob8$&$\ob9$&$\ob13$&$\ob14$&$\ob17$&$\ob25$&$\ob28$&$\ob31$&$\ob32$&$\ob34$&$\ob37$&$\ob40$&$\ob42$&$\ob43$&$\ob45$&$\ob47$&W&DM&RS&RS\%\\
         \midrule 
 & &\checkmark& & & & &\checkmark&\checkmark&\checkmark& & &\checkmark& &\checkmark& &\checkmark& &1&3&94&95.9\\
 & &\checkmark& & & &\checkmark&\checkmark&\checkmark&\checkmark&\checkmark&\checkmark&\checkmark&\checkmark&\checkmark&\checkmark&\checkmark&\checkmark&64&3&216&76.3\\
\checkmark&\checkmark& &\checkmark&\checkmark&\checkmark& & & & & & & & & & & & &21&91&81&41.9\\
\checkmark&  &  &\checkmark&\checkmark&\checkmark&  &  &  &  &  &  &  &  &  &  &  &  &230&186&131&23.9\\
\checkmark&  &  &\checkmark&\checkmark&\checkmark&  &  &  &  &  &  &  &  &  &  &  &  &106&150&108&29.7\\
    \end{tabular}
\end{table*}

 \Cref{tab:rsFreq} further delves into the RS patterns. In the table, we binarize orbit counts and use a checkmark to indicate that an address has a non-zero count for the associated orbit. Certain patterns exhibit a strong association with addresses linked to ransomware. A significant proportion of addresses (95.9\%) with the pattern in the first row of the table are identified as ransomware addresses. It is highly probable that the remaining white address with this pattern is also associated with e-crime activities.  \Cref{tab:allFreqPatterns} shows the most and the least likely patterns associated with white addresses. 

\begin{table*}[t] 
    \caption{The top-3 and bottom-3 orbit patterns with the lowest and highest likelihood of being white addresses, respectively. A checkmark  indicates  non-zero orbit count. The Non-White\% column in the table represents the percentage of addresses that are NOT identified as white addresses. W (white), DM (darknet market) and RS (ransomware) are counts of addresses with the related pattern.  For example, for the first orbit pattern, 99.9\% (i.e., 9800) of the addresses with that pattern are identified as ransomware, while the remaining addresses are white.}
    \label{tab:allFreqPatterns}
    \centering
    \tiny
        \begin{tabular}{p{0.10cm} p{0.10cm} p{0.15cm} p{0.15cm} p{0.15cm} p{0.15cm} p{0.15cm} p{0.15cm} p{0.15cm} p{0.15cm} p{0.15cm} p{0.15cm} p{0.15cm} p{0.15cm} p{0.15cm} p{0.15cm} p{0.15cm} p{0.15cm} r r r r r}
\toprule
 $\ob10$&$\ob12$&$\ob13$&$\ob14$&$\ob25$&$\ob28$&$\ob30$&$\ob31$&$\ob32$&$\ob34$&$\ob37$&$\ob40$&$\ob41$&$\ob43$&$\ob44$&$\ob45$&$\ob46$&$\ob47$&W&DM&RS&Non-White\%\\
\midrule
  & & & & & & &$\checkmark$&$\checkmark$& & &$\checkmark$&$\checkmark$& & & &$\checkmark$&$\checkmark$&4& 0&9800&99.9\\
  & & & & &$\checkmark$&$\checkmark$& &$\checkmark$& & & & & & & & & &61&14&7316&99.0\\
 $\checkmark$&$\checkmark$& & & & & & & & & & & & & & & & &116&10&10666&98.8\\
\midrule
  & & & &$\checkmark$&$\checkmark$& &$\checkmark$&$\checkmark$&$\checkmark$&$\checkmark$& & &$\checkmark$&$\checkmark$&$\checkmark$& &$\checkmark$&12915&0 &1249&8.8\\
 $\checkmark$&$\checkmark$&$\checkmark$&$\checkmark$& & & & & & & & & & & & & & &6029&1&442&6.8\\
 & & & &$\checkmark$&$\checkmark$& &$\checkmark$&$\checkmark$&$\checkmark$&$\checkmark$& & & & & & & &9309&0&533&5.4\\
\bottomrule
    \end{tabular}
\end{table*}

Our analysis has provided us with the following insights into e-crime behavior.  
\begin{itemize}[leftmargin=*]
\item On average, white addresses have three distinct non-zero orbits, whereas darknet addresses have six non-zero orbits. Similarly, RS addresses have six non-zero orbits. These values are higher than expected because the Bitcoin community discourages address reuse in transactions. As a result, we would expect most addresses to participate in fewer chainlets. However, the high values are likely a result of multiple payments being made to the same address, which indicates a lack of common privacy considerations in these types of e-crime transactions.
    \item Orbit 1 in darknet addresses reveals that coins transferred in a transaction are not re-transferred on the same day, indicating that darknet payments are retained in the receiving address. This aligns with the previous escrow practice of darknet markets where payments to sellers were withheld until buyers confirmed receipt of goods. Orbit 1 does not appear in the top 10 most frequent orbits of either ransomware or white addresses.
    \item Orbits 9 and 12 in ransomware addresses indicate immediate forwarding of payments to new addresses (see \Cref{fig:orbitsx2n}). 
     
    \item Orbit 9 indicates merges coins from multiple payments. However, none of these orbits connect to coin-mixing transactions (see the appendix for coin-mixing transactions). 
    \item Orbits 30, 31, 32, 39, 40, 41, 46, and 47 are observed in chainlets that may be involved in coin-mixing transactions. Interestingly, these orbits are not present in the top 10 most common patterns for any address type. E-crime coins are first combined in an address and then coin-mixed. 
\end{itemize}

\subsubsection{BitcoinHeist Benchmark}
 
We use orbits of Bitcoin addresses in a binary (white vs. ransomware) address classification task on the BitcoinHeist dataset and compare our results to those reported in the article by Akcora et al.~\cite{akcora2021bitcoinheist}. We obtained 2916697 ransomware and white addresses from the authors of the dataset. The BitcoinHeist dataset comprises 1048576 transactions from January 2009 to December 2018. Transactions with amounts less than $\bitcoin 0.3$ were removed as ransomware payments are usually significantly larger. Since the dataset only includes ransomware and white addresses, we present the performance of our methods in terms of binary classification by using a Random Forest~\cite{breiman2001random} with 300 trees. 

In 24 ransomware families, at least one address appears in more than one 24-hour time window.   CryptoLocker has  13 addresses that appear more than 100 times each. The CryptoLocker address {\tiny 1LXrSb67EaH1LGc6d6kWHq8rgv4ZBQAcpU} appears for a maximum of 420 times.  Four addresses have conflicting ransomware labels between Montreal and Padua datasets. APT  (Montreal) and Jigsaw (Padua) ransomware families have two and one \textit{multisig} addresses, respectively. All other addresses are ordinary addresses that start with '1'.

\begin{table}[t]
\caption{AUC scores for address classification using features from the BitcoinHeist dataset and Orbits. White addresses were under-sampled without replacement, while ransomware addresses were not under-sampled. The results indicate that using Orbits improves address classification significantly.\label{tab:sampleHeist}}
    \centering
    \begin{tabular}{l c c c}
    \toprule
        AUC@Sample&	BitcoinHeist&	Orbit&	BitcoinHeist $+$ Orbit\\
        \midrule
AUC@1M&	 0.791(0.005)&	 0.839(0.003)&	0.895(0.006)\\
AUC@500K&	0.798(0.006)&	0.849(0.002)&	0.903(0.003)\\
AUC@100K&	0.808(0.003)&	0.856(0.005)&	0.908(0.003)\\
AUC@50K	&0.807(0.006)&	0.857(0.004)&	0.908(0.003)\\
\bottomrule
    \end{tabular}
\end{table}

\Cref{tab:sampleHeist} shows the AUC scores of the ROC curve for the binary address classification task. In BitcoinHeist+Orbit, we combine feature sets of addresses. The AUC of the BitcoinHeist and BitcoinHeist+Orbit models improves as the white sample rate increases from 50K to 100K, but larger sample rates decrease performance. However, the performance of the Orbit model remains relatively stable with larger white sampling rates, decreasing only by 0.018 from 50K to 1M. This suggests that orbits are relatively robust against the class imbalance issues that often occur in blockchain data analytics, as labeled data is often scarce in these tasks. 

The superiority of the BitcoinHeist+Orbit model over the Orbit model can be attributed to the integration of the income feature from the BitcoinHeist dataset, which calculates the sum of incoming coin amounts for each address. Please note that orbits do not consider the coin amounts that are transferred. As the ransom amounts are typically consistent (e.g., $\bitcoin$0.5), the income provides a good feature to predict the ransomware label. The Mean Decrease in Impurity~\cite{breiman2001random} identifies income, $\ob9$, and $\ob12$ as the three most important features of the classifier. Incorporating the income feature into the orbits data is computationally efficient (i.e., $O(\V)$) as it only requires summing the incoming amounts for each address.

It's worth mentioning that the BitcoinHeist study employs graph flow-based features for each address, which are computationally intensive. To minimize computational expenses, the authors applied edge filtering with a threshold of 0.3 Bitcoins. On the other hand, orbits are not as computationally demanding (\Cref{sec:orbitcompl}), thus, we were not constrained by the computational complexity issues. This permitted us to extract orbits for all Bitcoin addresses within the examined period.

\begin{table}[]
\caption{AUC (and standard deviation in parentheses) values for address classification using BitcoinHeist features and Orbits. Both white (W) and ransomware (RS) addresses are under-sampled without replacement. }
    \label{tab:gnncomparison}
    \centering
    \begin{tabular}{r l l l l l}
    \toprule
        Sample&	 &	GCN& GIN&DGCNN  	\\
        \midrule
20K&	 &	0.76(0.002) &	0.90(0.02)&0.79(0.09)  \\
10K&	 &	0.70(0.004) &0.88(0.01)	& 0.75(0.09) \\
5K&	 &	0.70(0.002) &0.86(0.02)	& 0.78(0.09) \\
\bottomrule
    \end{tabular}
\end{table}

\noindent\textbf{Graph Neural Networks on the BitcoinHeist Benchmark.} We evaluate the orbits-based method against four graph classification techniques on the BitcoinHeist data utilizing graph neural networks: a GCN~\cite{kipf2016semi}, DGCNN~\cite{zhang2018end}, GIN~\cite{xu2018powerful} and GraphSage~\cite{hamilton2017inductive}. Our objective is to evaluate the performance of orbit-based classifiers by comparing them to Graph Neural Network (GNN) classifiers. While there are numerous methods, we are not conducting a comprehensive comparison of all of them. Instead, we aim to provide a good understanding of what GNNs can achieve. We report the average accuracy of five runs, along with their standard deviation.  In these models,  we adopted the grid search parameters from the benchmark article by Errica et al.~\cite{erricafair20}. 

We could not apply existing node (i.e., address) classification approaches~\cite{xiao2022graph} to the Bitcoin transaction network because the Bitcoin transaction network is too large and contains as few as four e-crime addresses per day (over ten thousand days).  Instead, we re-formulated the address classification task as the classification of the subgraph of the two-hop neighborhood of an address. Formally, the neighborhood graph of an address is the combination of the induced subgraphs of i) $2$-chainlets where the second transaction outputs coins to the address and ii) $2$-chainlets where the first transaction outputs coins to the address. An example of a neighborhood graph is shown in \Cref{fig:rsmostfreq}. 

There are several advantages of using the neighborhood approach instead of a traditional node classification approach. Firstly, the neighborhood approach enables an inductive learner where an address from any day can be classified by using its neighborhood graph from that day, while transductive node classification is restricted to the specific graph on which it was trained and cannot be applied to new, unseen graphs. Secondly, the neighborhood approach provides us with the ability to control the number of addresses in the model, as GNNs can have enormous computational costs when applied to very large datasets.

Despite utilizing the neighborhood graph, neural networks still face the challenge of data imbalance. In the case of more than 600K addresses daily, only a small number of e-crime addresses exist. To address this problem, we take two steps. Firstly, we create a balanced dataset to evaluate the graph classification methods where close to half of the addresses are white (i.e., 57\%). Secondly, we restrict the number of addresses fed into the model. 

The resulting AUC scores are shown in \cref{tab:gnncomparison}. In addition to the models mentioned, we attempted to train a robust GraphSage~\cite{xu2018powerful} model, but it did not perform well as the model AUC did not improve throughout the training. We believe that this could be due to the sparsity of the Bitcoin transaction network, where many addresses have only one or two neighbors, and the usage of sampling-based approaches leads to disconnected graphs.

\begin{figure*}[t]
\begin{subfigure}{.31\textwidth}
  \centering 
  \includegraphics[height=0.85\linewidth]{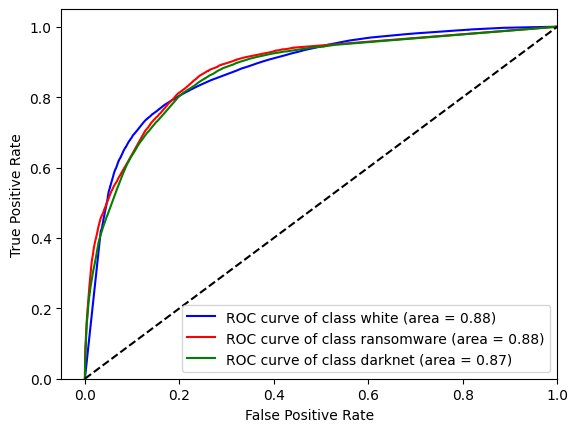} 
  \caption{\footnotesize  Sample=100K.}
  \label{fig:auc100k}
\end{subfigure}
~
\begin{subfigure}{.31\textwidth}
  \centering 
  \includegraphics[height=0.85\linewidth]{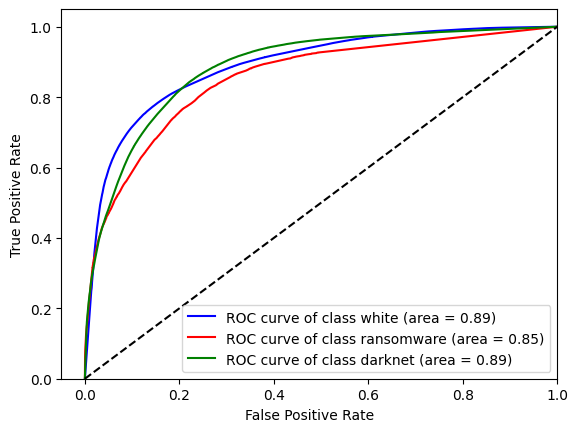} 
  \caption{\footnotesize  Sample=500K.}
  \label{fig:auc500k}
\end{subfigure}
~
\begin{subfigure}{.31\textwidth}
  \centering 
  \includegraphics[height=0.85\linewidth]{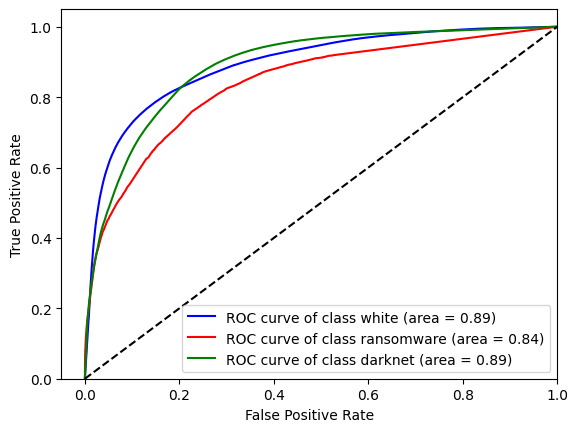} 
  \caption{\footnotesize  Sample=1M.}
  \label{fig:auc1m}
\end{subfigure}
 \caption{AUC scores for Orbit-based e-crime address classification in the period of 2009-2023. White and darknet classes are undersampled. Orbit performance is quite robust against the increasing sample sizes.} 
\label{fig:allorbit}
\end{figure*}

As shown in \Cref{tab:gnncomparison}, GIN is a powerful model and consistently improves its AUC score with more training data. However, the complexity of the model limits the amount of training data that can be used. However, these results are achieved over balanced data assumptions which do not hold over all data. As we show in the next section, our orbit method is capable of achieving improved AUC scores for the challenging task of three-way classification when trained on large e-crime data.

\noindent\textbf{Orbits vs. GIN.} The results of GIN in \Cref{tab:gnncomparison} (0.9 AUC) are similar to those of the Orbit results in \Cref{tab:sampleHeist} (0.857 AUC). However, GIN is limited to smaller datasets of under 20K addresses, while Orbit-based models can utilize a full day's worth of addresses (over 500K) to make predictions for the next day. This allows us to incorporate temporal patterns into our models and give more importance to recent Orbit patterns.

\subsubsection{Orbits on E-crime Graph Classification}

The goal of this experiment is to generalize the orbit e-crime address classification performance with a larger dataset.  We take a sample of 1\% of the daily white addresses in the Bitcoin history, resulting in 3.16 million addresses, and also include all  1.28 million darknet market and 64,000 ransomware addresses.

\begin{table}[]
\caption{AUC scores for active ($O_{act}$) and passive ($O_{pass}$) orbits on 1\% of Bitcoin addresses (4.3M addresses).}
    \label{tab:allauc}
    \centering
    \begin{tabular}{c c c}
    \toprule 
    Class&$O_{act}$ AUC&$O_{pass}$ AUC\\
    \midrule
        White &0.731(0.006) &0.870(0.002)  \\ RS &0.725(0.004) &0.736(0.001)  \\ DM &0.727(0.004) &0.876(0.001)  \\
        \bottomrule
    \end{tabular}
\end{table}

\Cref{fig:allorbit} illustrates the three-way address classification performance for white, ransomware, and darknet addresses. The figure tracks the AUC score for each class with a one-versus-rest Random Forest classifier, as more white data is added in three steps to the classifier input. Training/test split is 80/20. The consistently high AUC score, above 0.8, demonstrates the strength of the classifier.

Furthermore, we investigate the role of different types of orbits in detecting e-crime payments. We consider two types of orbits as outlined in \Cref{sec:chainlet}: active orbits, which refer to the behavior of e-crime operators such as darknet market sellers and ransomware hackers, and passive orbits, which refer to the behavior of darknet market buyers or ransomware victims.  In \Cref{tab:allauc}, we evaluate the performance of different feature sets by limiting the data features to either active or passive orbits and calculating the macro-averaged AUC performance.

The results show that reducing the feature set to active orbits lead to a decrease in AUC scores compared to using both feature sets, which has AUC scores of 0.84 or higher (\Cref{fig:auc1m}).

However, the results also indicate that the passive orbits yield better AUC results, meaning that the behavior of victims or buyers is more informative in detecting e-crime payments. Furthermore, active orbit  AUC values per class are similar, which suggests that the performance of the classifier is consistent across different classes.

In conclusion, these results highlight the importance of considering both active and passive orbits in detecting e-crime payments. While passive orbits provide valuable information, the best classification performance is achieved by utilizing both active and passive orbits together.

\section{Conclusions}
\label{sec:conc}

 We propose a topological node embedding model, called "orbit," that captures a node's structural role in a UTXO graph. This approach results in compact and effective embeddings that can be easily integrated into node and graph classification models which outperforms current state-of-the-art methods in blockchain graph node classification tasks. Additionally, the orbit patterns can be visualized and used with interpretable and explainable models.

Our approach is not limited to Bitcoin graphs only and can be applied to UTXO graphs of other blockchains, such as ZCash and Monero, where e-crime is a prevalent issue. The orbit discovery process is computationally efficient and can be applied to very large graphs easily.

Future work could focus on  multi-modal graphs where meta attributes of transactions can be incorporated as node features. Another promising area of research is developing self-supervised ways to learn node and edge functions from data to define new orbits.

\section{Acknowledgements}
This work is supported in part by National Science Foundation under Grant No. ECCS 2039701, ECCS 1824716, DMS 1925346, CNS 1837627, OAC 1828467, IIS 1939728, CNS 2029661 and Canadian NSERC Discovery Grant RGPIN-2020-05665.

\bibliographystyle{plain}
\bibliography{orbits.bib}

\newpage
\section{Appendix}

In this part, we give further details of our method and experiments. In \Cref{sec:orbits}, chainlet orbits are visualized and described. In \Cref{sec:applic}, we describe how orbits can be applied to other UTXO blockchains. In \Cref{sec:mixing}, we explain the coin-mixing behavior and give our related orbits.

\section{UTXO network and Chainlets}
\label{sec:orbits}
In the context of the Bitcoin blockchain, a transaction can be formally defined in terms of its input and output as follows:
Input: A set of one or more digital signatures, or "unspent transaction outputs" (UTXOs), that authorize the transfer of value from the sender's Bitcoin address to the recipient's address. UTXOs are previously-created outputs of transactions that were not spent yet and they are used as an input in a new transaction.
Output: A set of one or more new UTXOs that represent the transfer of value from the sender's address to the recipient's address.
In other words, a Bitcoin transaction takes one or more UTXOs as input and creates one or more new UTXOs as output. The total value of the inputs must be greater than or equal to the total value of the outputs, otherwise the transaction will be considered invalid.

Figures~\ref{fig:orbits0}, \ref{fig:orbitsx1n}, \ref{fig:orbitsx2n}, \ref{fig:orbitsx3n} list all orbits and define address roles in them. We use the X-M-N notations to describe chainlets where the first chainlet has X inputs, M outputs whereas the second chainlet has N outputs. In orbits, we are particularly interested in M and N.

\section{Applicability to Monero and Zcash}
\label{sec:applic}

The concept of orbits can be applied to heterogeneous graphs like the UTXO transaction networks that consist of transaction and address nodes. In the case of Monero and ZCash, which both utilize the UTXO data model~\cite{akcora2022blockchain}, orbits can be directly applied to the unshielded transactions pool of ZCash without any modifications.

In Monero, the association between addresses and transactions is concealed in all transactions, but the input and output counts are not hidden. With orbits defined for both input and output addresses of transactions, these counts can be tracked. However, instead of representing addresses as nodes, output hash IDs would be represented as nodes in this context.

\begin{figure*}[b]
\begin{minipage}[b]{0.30\linewidth}
  \centering 
  \includegraphics[width=0.8\linewidth]{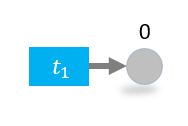} 
  \caption*{(a) Orbit   0:
if $\left|\Gamma_1^{o}\right|=1$ and $a \in \Gamma_1^{o}$ then $ a\in  \ob_0.$}
\end{minipage}
\hfill%
\begin{minipage}[b]{0.30\linewidth}
  \centering 
  \includegraphics[width=0.7\linewidth]{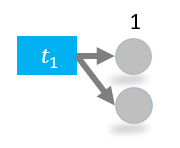} 
  \caption*{(b)  Orbit   1:
if  $\left|\Gamma_1^{o}\right|=2$  and $a \in \Gamma_1^{o}$ then $a \in \ob_1.$}
\end{minipage}
\hfill%
\begin{minipage}[b]{0.30\linewidth}
  \centering 
  \includegraphics[width=0.6\linewidth]{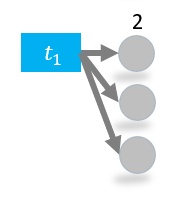} 
  \caption*{(c) Orbit 2: if $\left|\Gamma_1^{o}\right|=3$  and $a \in \Gamma_1^{o}$ then $a \in \ob_2.$}
\end{minipage}
\caption{Orbits 0 to 2 involve addresses who do not create a 2-chainlet for the daily network snapshot. These addresses keep the received coins dormant for the given day. All addresses have passive orbits.\label{fig:orbits0}} 
\end{figure*}

\begin{figure*}[]
\begin{minipage}[b]{0.30\linewidth}
  \centering 
  \includegraphics[width=\ra\linewidth]{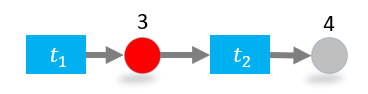} 
  \caption*{(d) Orbit   3:
if $\left|\Gamma_1^{o}\right|=1$, $\left|\Gamma_2^{i}\right|=1$, $\left|\Gamma_2^{o}\right|=1$ and  $a \in \Gamma_1^{o} $, $a \in \Gamma_2^{i} $ then $a \in \ob_3 $. Orbit   4:
if $\left|\Gamma_1^{o}\right|=1$, $\left|\Gamma_2^{i}\right|=1$, $\left|\Gamma_2^{o}\right|=1$ and $a \in \Gamma_2^{o}$, then $a \in \ob_4$. }
\end{minipage}
\hfill%
\begin{minipage}[b]{0.30\linewidth}
  \centering 
  \includegraphics[width=\ra\linewidth]{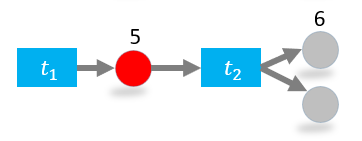} 
  \caption*{(e) Orbit 5:
if $\left|\Gamma_1^{o}\right|=1$, $\left|\Gamma_2^{i}\right|=1$, $\left|\Gamma_2^{o}\right|=2$, and $a \in \Gamma_1^{o} $, $a \in \Gamma_2^{i}$ then $a \in \ob_5$. 
 Orbit 6:
if $\left|\Gamma_1^{o}\right|=1$, $\left|\Gamma_2^{i}\right|=1$, $\left|\Gamma_2^{o}\right|=2$, and $a \in \Gamma_2^{o}$ then $a \in \ob_6.$ }
\end{minipage}
\hfill%
\begin{minipage}[b]{0.30\linewidth}
  \centering 
  \includegraphics[width=\ra\linewidth]{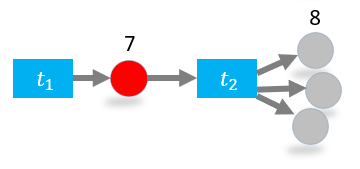} 
  \caption*{(f) Orbit 7:
if $\left|\Gamma_1^{o}\right|=1$, $\left|\Gamma_2^{i}\right|=1$, $\left|\Gamma_2^{o}\right|=3$, and $a \in \Gamma_1^{o} $, $a \in \Gamma_2^{i}$ then $a \in \ob7$.
 Orbit 8:
if $\left|\Gamma_1^{o}\right|=1$, $\left|\Gamma_2^{i}\right|=1$, $\left|\Gamma_2^{o}\right|=3$, and $a\in  \Gamma_2^{o}$ then $a \in \ob 8$. }
\end{minipage}
\caption{X-1-N addresses create orbits 3 to 8 where the first transaction has a single output address and the second transaction receives coins from the address. Red addresses have active orbits, whereas gray addresses have passive orbits.\label{fig:orbitsx1n}} 
\end{figure*}

\begin{figure*}[]
\begin{minipage}[b]{0.30\linewidth}
  \centering 
  \includegraphics[width=\ra\linewidth]{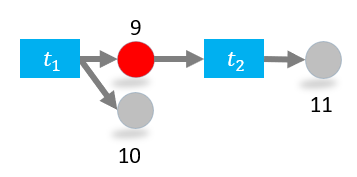} 
  \caption*{(g) Orbit 9:
if $\left|\Gamma_1^{o}\right|=2$, $\left|\Gamma_2^{i}\right|=1$, $\left|\Gamma_2^{o}\right|=1$, $a \in \Gamma_1^{o} $, $a \in \Gamma_2^{i} $ then $a \in\ob9$. Orbit 10:
if $\left|\Gamma_1^{o}\right|=2$, $\left|\Gamma_2^{i}\right|=1$, $\left|\Gamma_2^{o}\right|=1$, $a \in \Gamma_1^{o} $, $a \notin \Gamma_2^{i} $ then $a \in\ob10$.  
 Orbit 11: if $\left|\Gamma_1^{o}\right|=2$, $\left|\Gamma_2^{i}\right|=1$, $\left|\Gamma_2^{o}\right|=1$, $a \in \Gamma_2^{o} $ then $a \in\ob11$. }
\end{minipage} 
\hfill%
\begin{minipage}[b]{0.30\linewidth}
  \centering 
  \includegraphics[width=\ra\linewidth]{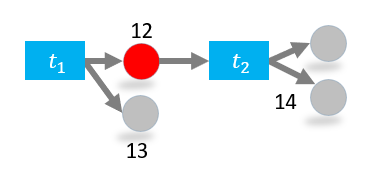} 
  \caption*{(h) Orbit 12:
if $\left|\Gamma_1^{o}\right|=2$, $\left|\Gamma_2^{i}\right|=1$, $\left|\Gamma_2^{o}\right|=2$, $a \in \Gamma_1^{o} $, $a \in \Gamma_2^{i} $ then $a \in\ob12$. Orbit   13:
if $\left|\Gamma_1^{o}\right|=2$, $\left|\Gamma_2^{i}\right|=1$,$\left|\Gamma_2^{o}\right|=2$, $a \in \Gamma_1^{o} $, $a \notin \Gamma_2^{i} $ then $a \in\ob13$. Orbit 14:
if $\left|\Gamma_1^{o}\right|=2$, $\left|\Gamma_2^{i}\right|=1$,$\left|\Gamma_2^{o}\right|=2$, $a \in \Gamma_2^{o} $ then $a \in\ob14$.}
\end{minipage}
\hfill%
\begin{minipage}[b]{0.30\linewidth}
  \centering 
  \includegraphics[width=\ra\linewidth]{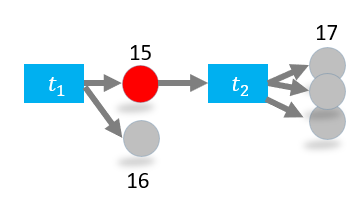} 
  \caption*{(i) Orbit 15:
if $\left|\Gamma_1^{o}\right|=2$, $\left|\Gamma_2^{i}\right|=1$,$\left|\Gamma_2^{o}\right|=3$, $a \in \Gamma_1^{o}$, $a \in \Gamma_2^{i} $ then  $a \in\ob15$.
 Orbit 16:
 if $\left|\Gamma_1^{o}\right|=2$, $\left|\Gamma_2^{i}\right|=1$,$\left|\Gamma_2^{o}\right|=3$, $a \in \Gamma_1^{o}$, $a \notin \Gamma_2^{i} $ then  $a \in\ob16$.
 Orbit   17:
if $\left|\Gamma_1^{o}\right|=2$, $\left|\Gamma_2^{i}\right|=1$,$\left|\Gamma_2^{o}\right|=3$, $a \in \Gamma_2^{o} $ then $a \in\ob17$. }
  \label{fig:dominatedOGB}
\end{minipage}
\\
\begin{minipage}[b]{0.30\linewidth}
  \centering 
  \includegraphics[width=\ra\linewidth]{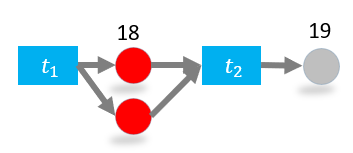} 
  \caption*{(j) Orbit   18:
if $\left|\Gamma_1^{o}\right|=2$, $\left|\Gamma_2^{i}\right|=2$,$\left|\Gamma_2^{o}\right|=1$, $a \in \Gamma_1^{o}$, $a \in \Gamma_2^{i} $ then  $a \in\ob18$.
 Orbit   19:
if $\left|\Gamma_1^{o}\right|=2$, $\left|\Gamma_2^{i}\right|=2$,$\left|\Gamma_2^{o}\right|=1$, $a \in \Gamma_2^{o}$, then  $a \in\ob19$.  }
\end{minipage}
\hfill%
\begin{minipage}[b]{0.30\linewidth}
  \centering 
  \includegraphics[width=\ra\linewidth]{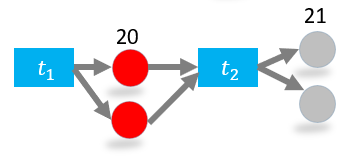} 
  \caption*{(k) Orbit   20:
if $\left|\Gamma_1^{o}\right|=2$, $\left|\Gamma_2^{i}\right|=2$,$\left|\Gamma_2^{o}\right|=2$, $a \in \Gamma_1^{o}$, $a \in \Gamma_2^{i} $ then  $a \in\ob20$. 
 Orbit   21:
if $\left|\Gamma_1^{o}\right|=2$, $\left|\Gamma_2^{i}\right|=2$,$\left|\Gamma_2^{o}\right|=2$, $a \in \Gamma_2^{o}$, then  $a \in\ob21$.   }
\end{minipage}
\hfill%
\begin{minipage}[b]{0.30\linewidth}
  \centering 
  \includegraphics[width=\ra\linewidth]{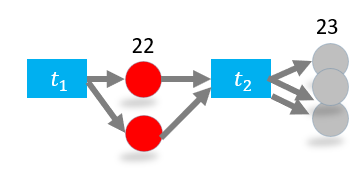} 
  \caption*{(l) Orbit   22:
if $\left|\Gamma_1^{o}\right|=2$, $\left|\Gamma_2^{i}\right|=2$,$\left|\Gamma_2^{o}\right|=3$, $a \in \Gamma_1^{o}$, $a \in \Gamma_2^{i} $ then  $a \in\ob22$. 
 Orbit   23:
if $\left|\Gamma_1^{o}\right|=2$, $\left|\Gamma_2^{i}\right|=2$,$\left|\Gamma_2^{o}\right|=3$, $a \in \Gamma_2^{o}$, then  $a \in\ob23$.  }
\end{minipage}
\caption{X-2-N create orbits 9 to 23 where the first transaction has two output addresses and the second transaction receives coins from one or two of the output addresses of the first transaction. Red addresses have active orbits, whereas gray addresses have passive orbits.\label{fig:orbitsx2n}} 
\end{figure*}

\begin{figure*}[]
\begin{minipage}[b]{0.30\linewidth}
  \centering 
  \includegraphics[width=\ra\linewidth]{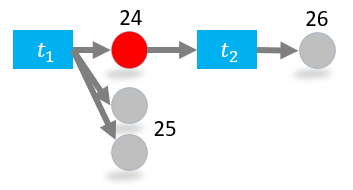} 
  \caption*{(m) Orbit   24:
if $\left|\Gamma_1^{o}\right|=3$, $\left|\Gamma_2^{i}\right|=1$,$\left|\Gamma_2^{o}\right|=1$, $a \in \Gamma_1^{o}$, $a \in \Gamma_2^{i} $ then  $a \in\ob24$.
 Orbit   25:
if $\left|\Gamma_1^{o}\right|=3$, $\left|\Gamma_2^{i}\right|=1$,$\left|\Gamma_2^{o}\right|=1$, $a \in \Gamma_1^{o}$, $a \notin \Gamma_2^{i} $ then  $a \in\ob25$.
 Orbit   26:
if $\left|\Gamma_1^{o}\right|=3$, $\left|\Gamma_2^{i}\right|=1$,$\left|\Gamma_2^{o}\right|=1$, $a \in \Gamma_2^{o}$, then  $a \in\ob26$. }
\end{minipage}
\hfill%
\begin{minipage}[b]{0.30\linewidth}
  \centering 
  \includegraphics[width=\ra\linewidth]{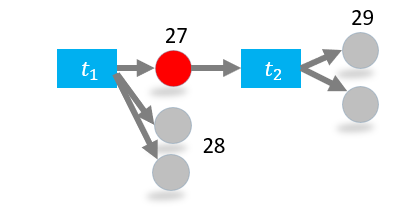} 
  \caption*{(n) Orbit   27:
if $\left|\Gamma_1^{o}\right|=3$, $\left|\Gamma_2^{i}\right|=1$,$\left|\Gamma_2^{o}\right|=2$, $a \in \Gamma_1^{o}$, $a \in \Gamma_2^{i} $ then  $a \in\ob27$.
 Orbit   28:
if $\left|\Gamma_1^{o}\right|=3$, $\left|\Gamma_2^{i}\right|=1$,$\left|\Gamma_2^{o}\right|=2$, $a \in \Gamma_1^{o}$, $a \notin \Gamma_2^{i} $ then  $a \in\ob28$.
 Orbit   29:
if $\left|\Gamma_1^{o}\right|=3$, $\left|\Gamma_2^{i}\right|=1$,$\left|\Gamma_2^{o}\right|=2$, $a \in \Gamma_2^{o} $ then  $a \in\ob29$.   }
\end{minipage}
\hfill%
\begin{minipage}[b]{0.30\linewidth}
  \centering 
  \includegraphics[width=\ra\linewidth]{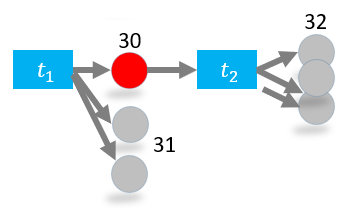} 
  \caption*{(o) Orbit   30:
if $\left|\Gamma_1^{o}\right|=3$, $\left|\Gamma_2^{i}\right|=1$,$\left|\Gamma_2^{o}\right|=3$, $a \in \Gamma_1^{o}$, $a \in \Gamma_2^{i} $ then  $a \in\ob30$.
 Orbit   31:
if $\left|\Gamma_1^{o}\right|=3$, $\left|\Gamma_2^{i}\right|=1$,$\left|\Gamma_2^{o}\right|=3$, $a \in \Gamma_1^{o}$, $a \notin \Gamma_2^{i} $ then  $a \in\ob31$.
 Orbit   32:
if $\left|\Gamma_1^{o}\right|=3$, $\left|\Gamma_2^{i}\right|=1$,$\left|\Gamma_2^{o}\right|=3$, $a \in \Gamma_2^{o} $ then  $a \in\ob32$.   } 
\end{minipage}
\\
\begin{minipage}[b]{0.30\linewidth}
  \centering 
  \includegraphics[width=\ra\linewidth]{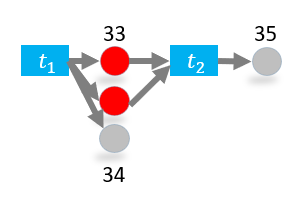} 
  \caption*{(p) Orbit   33:
if $\left|\Gamma_1^{o}\right|=3$, $\left|\Gamma_2^{i}\right|=2$,$\left|\Gamma_2^{o}\right|=1$, $a \in \Gamma_1^{o}$, $a \in \Gamma_2^{i} $ then  $a \in\ob33$.
 Orbit   34:
if $\left|\Gamma_1^{o}\right|=3$, $\left|\Gamma_2^{i}\right|=2$,$\left|\Gamma_2^{o}\right|=1$, $a \in \Gamma_1^{o}$, $a \notin \Gamma_2^{i} $ then  $a \in\ob34$.
 Orbit   35:
if $\left|\Gamma_1^{o}\right|=3$, $\left|\Gamma_2^{i}\right|=2$,$\left|\Gamma_2^{o}\right|=1$, $a \in \Gamma_2^{o}$ then  $a \in\ob35$. }
\end{minipage}
\hfill%
\begin{minipage}[b]{0.30\linewidth}
  \centering 
  \includegraphics[width=\ra\linewidth]{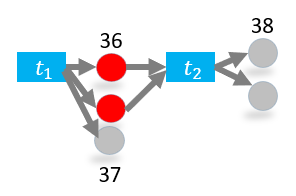} 
  \caption*{(r)  Orbit   36:
if $\left|\Gamma_1^{o}\right|=3$, $\left|\Gamma_2^{i}\right|=2$,$\left|\Gamma_2^{o}\right|=2$, $a \in \Gamma_1^{o}$, $a \in \Gamma_2^{i} $ then  $a \in\ob36$.
 Orbit   37:
if $\left|\Gamma_1^{o}\right|=3$, $\left|\Gamma_2^{i}\right|=2$,$\left|\Gamma_2^{o}\right|=2$, $a \in \Gamma_1^{o}$, $a \notin \Gamma_2^{i} $ then  $a \in\ob37$.
 Orbit   38:
if $\left|\Gamma_1^{o}\right|=3$, $\left|\Gamma_2^{i}\right|=2$,$\left|\Gamma_2^{o}\right|=2$, $a \in \Gamma_2^{o}$ then  $a \in\ob38$.}
\end{minipage}
\hfill%
\begin{minipage}[b]{0.30\linewidth}
  \centering 
  \includegraphics[width=\ra\linewidth]{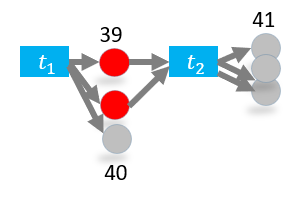} 
  \caption*{(s)  Orbit   39:
if $\left|\Gamma_1^{o}\right|=3$, $\left|\Gamma_2^{i}\right|=2$,$\left|\Gamma_2^{o}\right|=3$, $a \in \Gamma_1^{o}$, $a \in \Gamma_2^{i} $ then  $a \in\ob39$.
 Orbit   40:
if $\left|\Gamma_1^{o}\right|=3$, $\left|\Gamma_2^{i}\right|=2$,$\left|\Gamma_2^{o}\right|=3$, $a \in \Gamma_1^{o}$, $a \notin \Gamma_2^{i} $ then  $a \in\ob40$.
 Orbit   41:
if $\left|\Gamma_1^{o}\right|=3$, $\left|\Gamma_2^{i}\right|=2$,$\left|\Gamma_2^{o}\right|=3$, $a \in \Gamma_2^{o}$ then  $a \in\ob41$. }
\end{minipage}
\\
\begin{minipage}[b]{0.30\linewidth}
  \centering 
  \includegraphics[width=\ra\linewidth]{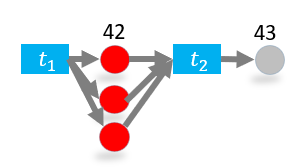} 
  \caption*{(t) Orbit   42:
if $\left|\Gamma_1^{o}\right|=3$, $\left|\Gamma_2^{i}\right|=3$,$\left|\Gamma_2^{o}\right|=1$, $a \in \Gamma_1^{o}$, $a \in \Gamma_2^{i} $ then  $a \in\ob42$.
 Orbit   43:
if $\left|\Gamma_1^{o}\right|=3$, $\left|\Gamma_2^{i}\right|=3$,$\left|\Gamma_2^{o}\right|=1$, $a \in \Gamma_2^{o}$ then  $a \in\ob43$. }
 
\end{minipage}
\hfill%
\begin{minipage}[b]{0.30\linewidth}
  \centering 
  \includegraphics[width=\ra\linewidth]{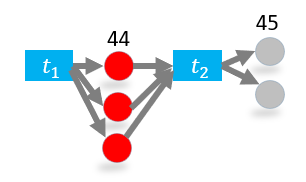} 
  \caption*{(u)  Orbit   44:
if $\left|\Gamma_1^{o}\right|=3$, $\left|\Gamma_2^{i}\right|=3$,$\left|\Gamma_2^{o}\right|=2$, $a \in \Gamma_1^{o}$, $a \in \Gamma_2^{i} $ then  $a \in\ob44$.
 Orbit   45:
if $\left|\Gamma_1^{o}\right|=3$, $\left|\Gamma_2^{i}\right|=3$,$\left|\Gamma_2^{o}\right|=2$, $a \in \Gamma_2^{o}$ then  $a \in\ob45$.   }
\end{minipage}
\hfill%
\begin{minipage}[b]{0.30\linewidth}
  \centering 
  \includegraphics[width=\ra\linewidth]{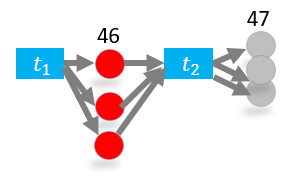} 
  \caption*{(v)  Orbit   46:
if $\left|\Gamma_1^{o}\right|=3$, $\left|\Gamma_2^{i}\right|=3$,$\left|\Gamma_2^{o}\right|=3$, $a \in \Gamma_1^{o}$, $a \in \Gamma_2^{i} $ then  $a \in\ob46$.
 Orbit   47:
if $\left|\Gamma_1^{o}\right|=3$, $\left|\Gamma_2^{i}\right|=3$,$\left|\Gamma_2^{o}\right|=3$, $a \in \Gamma_2^{o}$ then  $a \in\ob47$. }
\end{minipage}
 \caption{X-3-N create orbits 24 to 47 where the first transaction has three output addresses and the second transaction receives coins from one, two or three of the output addresses of the first transaction. Red addresses have active orbits, whereas gray addresses have passive orbits.\label{fig:orbitsx3n}} 
\end{figure*}

\section{Orbits and Coin-Mixing Transactions} 
\label{sec:mixing}
A coin-mixing transaction, also known as a coin-join transaction~\cite{maxwell2013coinjoin}, is a technique used to enhance the privacy and anonymity of cryptocurrency transactions. In a typical coin-mixing transaction, multiple participants combine their coins or funds into a single transaction, making it difficult to trace the original source of the funds.

A coin-mixing transaction works in the following way:

Participants: Several individuals who wish to improve the privacy of their transactions agree to participate in a coin-mixing process.

Inputs: Each participant selects a specific amount of their cryptocurrency holdings (coins) to contribute to the coin-mixing transaction. These individual inputs are usually of equal value to obfuscate the original ownership.

Mixing Pool: All the selected inputs from the participants are combined into a mixing pool or a single transaction. This process effectively mixes the coins together, making it challenging to identify which coins belong to each participant.

Outputs: The mixing pool generates new outputs or addresses that receive the mixed coins. The number of outputs is usually the same as the number of inputs to maintain balance.

Distribution: The newly mixed coins are then distributed back to the participants, but each participant receives their coins in a randomized manner. This step further obscures the link between the original inputs and the final outputs.

The mixing process is repeated multiple times to enhance privacy. It's worth noting that while coin-mixing transactions enhance privacy, they can also be associated with illicit activities if used for money laundering or other illegal purposes.  

In graph terms, the coin mixing transactions create 2-chainlets with many inputs and outputs. We identify several orbits that may encode coin mixing transactions. These are 
 orbits 30, 31, 32, 39, 40, 41, 46, and 47 in \Cref{fig:orbitsx2n,fig:orbitsx3n}. 

\end{document}